\newenvironment{proof sketch}[1]{\noindent {\emph{Proof sketch of #1:}}}{\hfill \qed}
\newtheorem{theorem}{Theorem}
\newtheorem{lemma}{Lemma}
\newtheorem{definition}{Definition}
\newtheorem{coro}{Corollary}
\newtheorem{rem}{Remark}
\newtheorem{defn}{Definition}
\newcommand{\eqdf}{\triangleq}
\newcommand{\GVOP}{GVOP}
\newcommand{\ingress}{\mathcal{I}}
\newcommand{\vnet}{\emph{vnet}}
\newcommand{\bin}{b_{\text{\emph{in}}}}
\newcommand{\bout}{b_{\text{\emph{out}}}}
\newcommand{\dest}{\emph{\text{dest}}}
\newcommand{\PR}{\textit{Primal}}
\newcommand{\DU}{\textit{Dual}}
\begin{document}

\title{Competitive and Deterministic Embeddings\\of Virtual Networks}

\author{Guy Even$^1$, Moti Medina$^1$, Gregor Schaffrath$^2$, Stefan Schmid$^2$\\
$^1$ Tel Aviv University, \{guy,medinamo\}@eng.tau.ac.il\\
$^2$ Telekom Innovation Laboratories (T-Labs) \& TU Berlin\\
\{grsch,stefan\}@net.t-labs.tu-berlin.de
}

\date{}

\maketitle \thispagestyle{empty}

\begin{abstract}
Network virtualization is an important concept to overcome the
ossification of today's Internet as it facilitates innovation also
in the network core and as it promises a more efficient use of the
given resources and infrastructure. Virtual networks (VNets) provide
an abstraction of the physical network: multiple VNets may cohabit
the same physical network, but can be based on completely different
protocol stacks (also beyond IP). One of the main challenges in
network virtualization is the efficient admission control and
embedding of VNets. The demand for virtual networks (e.g., for a
video conference) can be hard to predict, and once the request is
accepted, the specification / QoS guarantees must be ensured
throughout the VNet's lifetime. This requires an admission control
algorithm which only selects high-benefit VNets in times of scarce
resources, and an embedding algorithm which realizes the VNet in
such a way that the likelihood that future requests can be embedded
as well is maximized.

This article describes a generic algorithm for the online VNet
embedding problem which does not rely on any knowledge of the future
VNet requests but whose performance is competitive to an optimal
offline algorithm that has complete knowledge of the request
sequence in advance: the so-called competitive ratio is, loosely
speaking, logarithmic in the sum of the resources. Our algorithm is
generic in the sense that it supports multiple traffic models,
multiple routing models, and even allows for nonuniform benefits and
durations of VNet requests.
\end{abstract}

\sloppy

\section{Introduction}

Virtualization is an attractive design principle as it abstracts
heterogeneous resources and as it allows for resource sharing. Over
the last years, \emph{end-system virtualization} (e.g., Xen or
VMware) revamped the server business, and we witness a trend towards
\emph{link-virtualization}: router vendors such as Cisco and Juniper
offer router virtualization, and Multiprotocol Label Switching
(MPLS) solutions and Virtual Private Networks (VPNs) are widely
deployed. Also split architectures like OpenFlow receive a lot of
attention as they open new possibilities to virtualize links.

\emph{Network virtualization}~\cite{chowdhury2009survey} goes one
step further and envisions a world where multiple \emph{virtual
networks (VNets)}---which can be based on different networking
protocols---cohabit the same physical network (the so-called
\emph{substrate network}). VNet requests are issued to a network
provider and can have different specifications, in terms of
Quality-of-Service (QoS) requirements, supported traffic and routing
models, duration, and so on. The goal of the provider is then to
decide whether to accept the request and at what price
(\emph{admission control}), and subsequently to realize (or
\emph{embed}) the VNet such that its specification is met while
minimal resources are used---in order to be able to accept
future requests.

Virtual networks have appealing properties, for instance, (1) they
allow to innovate the Internet by making the network core
``programmable'' and by facilitating service-tailored networks which
are optimized for the specific application (e.g., content
distribution requires different technologies and QoS guarantees
than, live streaming, gaming, or online social networking);
(2) the given resources can be (re-)used more efficiently, which
saves cost at the provider side; (3) start-up companies can
experiment with new protocols and services without investing in an
own and expensive infrastructure; among many more.

Due to the flexibility offered by network virtualization, the demand
for virtual networks can be hard to predict---both in terms of
arrival times and VNet durations. For example, a VNet may be
requested at short notice for a video conference between different
stakeholders of an international project.
It is hence mandatory that this VNet be realized quickly (i.e., the
admission and embedding algorithms must have low time complexities)
and that sufficient resources are \emph{reserved} for this
conference (to ensure the QoS spec).

This article attends to the question of how to handle VNets arriving
one-by-one in an \emph{online fashion}~\cite{BE}: Each request
either needs to be embedded or rejected, and the online setting
means that the decision (embed or reject) must be taken without any
information about future requests; the decision cannot be changed
later (no preemption).

The goal is to maximize the overall profit, i.e., the sum of the
benefits of the embedded VNets. We use competitive analysis for
measuring the quality of our online algorithm. The \emph{competitive
  ratio} of an online algorithm is $\alpha$ if, for every sequence of
requests $\sigma$, the benefit obtained by the algorithm is at least
an $\alpha$ fraction of the optimal offline benefit, that is, the
benefit obtainable by an algorithm with complete knowledge
  of the request sequence
$\sigma$ in advance.

\subsection{VNet Specification and Service Models}

There are many service models for VNets~\cite{juttner2003bandwidth},
and we seek to devise generic algorithms applicable to a wide range
of models. The two main aspects of a service model concern the
modeling of traffic and the modeling of routing.

\paragraph{Traffic}

We briefly outline and compare three models for allowable traffic.
(1)~In the \emph{customer-pipe model}, a request for a VNet includes a
traffic matrix that specifies the required bandwidth between every
pair of terminals of the VNet. (2)~In the \emph{hose
  model}~\cite{duffield1999flexible,fingerhut1997designing}, each
terminal $v$ is assigned a maximum ingress bandwidth $b_{in}(v) \geq
1$ and a maximum egress bandwidth $b_{out}(v) \geq 1$. Any traffic
matrix that is consistent with the ingress/egress values must be
served, (3)~Finally, we propose an \emph{aggregate ingress model}, in
which the set of allowed traffic patterns is specified by a single
parameter $\ingress \geq 1$. Any traffic in which the sum
of ingress bandwidths is at most $\ingress$ must be served.

The customer-pipe model sets detailed constraints on the VNet and
enables efficient utilization of network resources as the substrate
network has to support only a single traffic matrix per VNet. On the
other hand, the hose model offers flexibility since the allowed
traffic matrices constitute a polytope. Therefore, the VNet
embedding must to take into account the ``worst'' allowable traffic
patterns.

Multicast sessions are not efficiently supported in the
customer-pipe model and the hose model. In these models, a multicast
session is translated into a set of unicasts from the ingress node
to each of the egress nodes.  Thus, the ingress bandwidth of a
multicast is multiplied by the number of egress
nodes~\cite{eisenbrand2005improved,erlebach2004optimal,icalp10core,gupta2003simpler}.

In the aggregate ingress model, the set of allowable
traffic patterns is wider, offers simpler specification, and more
flexibility compared to the hose model. In addition, multicasting and
broadcasting do not incur any penalty at all since intermediate nodes
in the substrate network duplicate packets exiting via different links
instead of having multiple duplicates input by the ingress node.
For example, the following traffic patterns are allowed in the
aggregate ingress model with parameter $\ingress$: (i)~a single
multicast from one node with bandwidth $\ingress$, and (ii)~a set of
multicast sessions with bandwidths $f_i$, where $\sum_i f_i \leq
\ingress$. Hence, in the aggregate ingress model traffic may vary
from a ``heavy'' multicast (e.g., software update to multiple
branches) to a multi-party video-conference session in which every
participant multicasts her video and receives all the videos from
the other participants.

\paragraph{Routing}

We briefly outline three models for the allowed routing.  (1) In
\emph{tree routing}, the VNet is embedded as a Steiner tree in the
substrate network that spans the terminals of the VNet. (2) In
\emph{single path routing}, the VNet is embedded as a union of paths
between every pair of terminals. Each pair of terminals communicates
along a single path. (3) In \emph{multipath routing}, the VNet is
embedded as a union of linear combinations of paths between terminals.
Each pair of terminals $u$ and $v$ communicates along multiple paths.
The traffic from node $u$ to node $v$ is split among these paths.  The
linear combination specifies how to split the traffic.

In tree routing and single path routing, all the traffic between two
terminals of the same VNet traverses the same single path. This
simplifies routing and keeps the packets in order. In multipath
routing, traffic between two terminals may be split between multiple
paths.  This complicates routing since a router needs to decide
through which port a packet should be sent. In addition, routing
tables are longer, and packets may arrive out of order. Finally,
multicasting with multipath routing requires network
coding~\cite{ahlswede2000network}.

\paragraph{Packet Rate}

We consider link bandwidth as the main resource of a link. However,
throughput can also depend on processing power of the network nodes.
Since a router needs to inspect each packet to determine its actions,
the load incurred on a router is mainly influenced by the so-called
\emph{packet rate}, which we model as an additional parameter of a
VNet request. If packets have uniform length, then the packet rate is
a linear function of the bandwidth.

\paragraph{Duration and Benefit}

The algorithms presented in this article can be competitive with
respect to the \emph{total number} of embedded VNets. However, our
approach also supports a more general model where VNets have
\emph{different} benefits. Moreover, we can deal with VNets of
finite durations. Therefore, in addition to the specification of the
allowable traffic patterns, each request for a VNet has the
following parameters: (i)~\emph{duration}, i.e., the start and
finish times of the request, and (ii)~\emph{benefit}, i.e., the
revenue obtained if the request is served.

\subsection{Previous Work}

For an introduction and overview of network virtualization, the
reader is referred to~\cite{chowdhury2009survey}. A description of
a prototype network virtualization architecture (under development
at Telekom Innovation Laboratories) appears in~\cite{visa09virtu}.

The virtual network embedding problem has already been studied in
various settings, and it is well-known that many variants of the
problem are computationally hard (see, e.g.,~\cite{nphard,chekuri}).
Optimal embeddings in the multi-path routing model exist for all
traffic models.  In fact, in the customer pipe model, an optimal
multipath fractional embedding can be obtained by solving a
multicommodity flow problem.
In the hose model, an optimal reservation for multipath routing in
the hose model is presented in~\cite{erlebach2004optimal}.
This algorithm can also be modified to handle the aggregate ingress model.

Offline algorithms for tree routing and single path routing lack edge
capacities and have only edge flow costs. Namely, these algorithm
approximate a min-cost embedding in the substrate graph without
capacity constraints.  In the hose model, constant approximation
algorithms have been developed for tree
routing~\cite{eisenbrand2005improved,icalp10core,gupta2003simpler}.
In the special case that the sum of the ingresses equals the
sum of the egresses, an optimal tree can be found efficiently, and the
cost of an optimal tree is within a factor three of the cost of an
optimal reservation for multipath
routing~\cite{italiano2006design,hose}.  Kumar et.  al~\cite{hose}
proved that in presence of edge capacity constraints, computing a tree
routing in the hose model is NP-hard even in the case where
$b_{in}(v)=b_{out}(v)$, for every node $v$. Moreover, they also showed
approximating the optimal tree routing within a constant factor is
NP-hard.

Published online algorithms for VNet embeddings are scarce.
In~\cite{grewal2008performance,liu2006mtra}, an online algorithm for
the hose model with tree routing is presented.  The algorithm uses a
pruned BFS tree as an oracle. Edge costs are the ratio between the
demand and the residual capacity.  We remark that, even in the
special case of online virtual circuits (``call admission''), using
such linear edge costs lead to trivial linear competitive
ratios~\cite{AAP}. The rejection ratio of the algorithm is analyzed
in~\cite{grewal2008performance,liu2006mtra}, but not the competitive
ratio. The problem of embedding multicast requests in an online
setting was studied in~\cite{kodialam2003online}. They used a
heuristic oracle that computes a directed Steiner tree. The
competitive ratio of the algorithm in~\cite{kodialam2003online} is
not studied.  In fact, much research has focused on heuristic
approaches, e.g., \cite{ammar} proposes heuristic methods for
constructing different flavors of reconfiguration policies; and
\cite{zhu06} proposes subdividing heuristics and adaptive
optimization strategies to reduce node and link stress.
In~\cite{AAmulticast}, an online algorithm is presented for the case
of multiple multicast requests in which the terminals the requests
arrive in an arbitrarily interleaved order. The competitive ratio of
the online algorithm in~\cite{AAmulticast} is $O(\log n \cdot \log d)$, where $n$ denotes the number of nodes in the substrate network and $d$ denotes the diameter of the substrate network.

Bansal et al.~\cite{podc11} presented a result on network mapping in
cloud environments where the goal is to minimize congestion induced by
the embedded workloads, i.e., to minimize the edge capacity
augmentation w.r.t. a feasible optimal embedding. They consider two
classes of workloads, namely depth-$d$ trees and complete-graph
workloads, and describe an online algorithm whose competitive ratio is
logarithmic in the number of substrate resources, i.e., nodes and
edges.  Moreover, every node in the workload is mapped to a node in
the substrate network, and every edge is mapped to a single path in
the substrate network between its (mapped) nodes.  In contrast, we
allow arbitrary workloads, a wide range of traffic models and routing
models, specify the mapping of nodes, and focus is on revenue
maximization.

Circuit switching is a special case of VNet embeddings in which each
VNet consists of two terminals and the routing is along a single path.
Online algorithms for maximizing the revenue of circuit switching were
presented in~\cite{AAP}. A general primal-dual setting for online
packing and covering appears in~\cite{BN06,BN09}.  In the context of
circuit switching, the \emph{load} of an edge $e$ in a network is the
ratio between the capacity reserved for the paths that traverses $e$,
and the capacity of $e$.  In the case of load (or congestion)
minimization the online algorithm competes with the minimum
augmentation~\cite{aspnes1997line,awerbuch2001competitive,BN06,podc11}.
In the case of permanent routing, Aspnes et. al~\cite{aspnes1997line}
designed an algorithm that augments the edge capacities by a factor of
at most $O(\log n)$ w.r.t. a feasible optimal routing.  Aspnes et.
al~\cite{aspnes1997line} also showed how to use approximated oracles
to embed min-cost Steiner trees in the context of multicast virtual
circuit routing.

\subsection{Our Contribution}\label{sec:contrib}

This article describes an algorithmic framework called \GVOP\ (for
\emph{general VNet online packing algorithm}) for online embeddings of
VNet requests. This framework allows us to decide in an online fashion
whether the VNet should be admitted or not. For the embedding itself,
an \emph{oracle} is assumed which computes the embeddings of VNets.
While our framework yields fast algorithms, the embedding itself may
be computationally hard and hence approximate oracles maybe be
preferable in practice. We provide an overview of the state-of-the-art
approximation algorithms for the realization of these oracles, and we
prove that the competitive ratio is not increased much when
approximate oracles are used in \GVOP. Our framework follows the
primal-dual online packing scheme by Buchbinder and
Naor~\cite{BN06,BN09} that provides an explanation of the algorithm of
Awerbuch et al.~\cite{AAP}.

In our eyes, the main contribution of this article lies in the
generality of the algorithm in terms of supported traffic and routing
models.  The \GVOP\ algorithm is input VNet requests from multiple
traffic models (i.e., customer-pipe, hose , or aggregate ingress
models) and multiple routing models (i.e., multipath, single path, or
tree routing). This implies that the network resources can be
shared between requests of all types.

We prove that the competitive ratio of our deterministic online
algorithm is, in essence, logarithmic in the resources of the
network. The algorithm comes in two flavors: (i)~A bi-criteria
algorithm that achieves a constant fraction of the optimal benefit
while augmenting resources by a logarithmic factor. Each request in
this version is either fully served or rejected. (ii)~An online
algorithm that achieves a logarithmic competitive ratio without
resource augmentation. However, this version may serve a fraction of
a request, in which case the associated benefit is also the same
fraction of the benefit of the request. However, if the allowed traffic
patterns of a request  consume at most a logarithmic fraction of
every resource, then this version either rejects the request or
fully embeds it.


\subsection{Article Organization}\label{sec:org}

The remainder of this article is organized as follows. We introduce
the formal model and problem definition in Section~\ref{sec:problem}.
The main result is presented in Section~\ref{se:main result}.
The algorithmic framework is described in Section~\ref{sec:framework}.
Section~\ref{sec:application} shows how to apply the framework to
the VNet embedding problem and discusses the embedding oracles to be
used in our framework under the different models. The article
concludes with a short discussion in Section~\ref{sec:conclusion}.


\section{Problem Definition}\label{sec:problem}
\label{sect:problem}

We assume an undirected communication network $G=(V,E)$ (called the
\emph{physical network} or the \emph{substrate network}) where $V$
represents the set of substrate nodes (or routers) and $E$ represents
the set of links. Namely, $\{u,v\}\in E$ for $u,v\in V$ denotes that
$u$ is connected to $v$ by a communication link.  Each edge $e$ has a
capacity $c(e)\geq 1$.
In Section~\ref{sec:nodes}, we will extend the
model also to node capacities (processing power of a node, e.g., to
take into account router loads).

The online input is as follows. The operator (or provider) of the
substrate network $G$ receives a sequence of VNet requests
$\sigma=\{r_1,r_2\ldots\}$. Upon arrival of request $r_j$, the
operator must either reject $r_j$ or embed it.  A request $r_j$ and
the set of valid embeddings of $r_j$ depend on the service model. A
VNet request $r_j$ has the following parameters:
\begin{inparaenum}[(1)]
\item A set $U_j \subseteq V$ of terminals, i.e., the nodes of the VNet.
\item A set $Tr_j$ of allowed traffic patterns between the terminals.
  For example, in the customer-pipe model, $Tr_j$ consists of a single
  traffic matrix. In the hose model, $Tr_j$ is a polytope of traffic
  matrices.
\item The routing model (multipath, single path, or tree).
\item The benefit $b_j$ of $r_j$. This is the revenue if the request
  is fully served.
\item The duration $T_j=[t_j^{(0)},t_j^{(1)}]$ of the request. Request
  $r_j$ arrives and starts at time $t_j^{(0)}$ and ends at time
  $t_j^{(1)}$.
\end{inparaenum}

The set of valid embeddings of a VNet request $r_j$ depends on the set
$Tr_j$ of allowed traffic patterns, the routing model, and the edge
capacities, for example: (1)~In the customer-pipe model with multipath
routing, an embedding is a multicommodity flow. (2)~In the hose model
with tree routing, a valid embedding is a set of edges with bandwidth
reservations that induces a tree that spans the terminals. The
reserved bandwidth on each edge may not exceed the capacity of the
edge. In addition, the traffic must be routable in the tree with the
reserved bandwidth.

If the allowed traffic patterns of a request $r_j$ consume at most a
logarithmic fraction of every resource, then our algorithm either
rejects the request or fully embeds it. If a request consumes at least
a logarithmic fraction of the resources, then the operator can accept
and embed a fraction of a request. If an operator accepts an
$\epsilon$-fraction of $r_j$, then this means that it serves an
$\epsilon$-fraction of every allowed traffic pattern. For example, in
the customer-pipe model with a traffic matrix $Tr$, only the traffic
matrix $\epsilon \cdot Tr$ is routed. The benefit received for
embedding an $\epsilon$-fraction of $r_j$ is $\epsilon\cdot b_j$.  The
goal is to maximize the sum of the received benefits.

\section{The Main Result}\label{se:main result}
Consider an embedding of VNet requests.
We can assign two values to the embedding: (1) The
benefit, namely, the sum of the benefits of the embedded VNets. (2)
The maximum congestion of a resource. The congestion of a resource
is the ratio between the load of the resource and the capacity of a
resource. For example, the load of an edge is the flow along the
edge, and the usage of a node is the rate of the packets it must
inspect. A bi-criteria competitive online packing algorithm is
defined as follows.
\begin{definition}
  Let $OPT$ denote an optimal offline fractional packing solution.  An online packing algorithm
  $Alg$ is $(\alpha,\beta)$-competitive if:
  \begin{inparaenum}[(i)]
  \item For every input sequence $\sigma$, the benefit of $Alg(\sigma)$ is at least $1/\alpha$ times the benefit of $OPT$.
  \item For every input sequence $\sigma$ and for every resource $e$, the congestion incurred by $Alg(\sigma)$ is at most $\beta$.
  \end{inparaenum}
\end{definition}

\medskip \noindent
The main result of this article is formulated in the following
theorem. Consider a sequence of VNet requests $\{r_j\}_j$ that
consists of requests from one of the following types: (i)~customer
pipe model with multipath routing, (ii)~hose model with multipath
routing, or single path routing, or tree routing, or (iii)~aggregate
ingress model with multipath routing, or single path routing, or
tree routing.
\begin{theorem}\label{thm:main}
Let $\beta=O(\log ( |E|\cdot (\max_e c_e) \cdot (\max_j b_j)))$.
For every sequence  $\{r_j\}_j$ of VNet requests,
  our \GVOP\ algorithm is a $(2,\beta)$-competitive online all-or-nothing VNet embedding algorithm.
\end{theorem}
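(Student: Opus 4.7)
The plan is to cast the VNet embedding problem as a packing LP and apply the primal-dual template of Buchbinder--Naor, using the embedding oracle as a black box. The primal has a variable $x_{j,p}$ for every request $r_j$ and every valid embedding pattern $p$ (a path, a Steiner tree, a multicommodity flow, etc., depending on the model) with objective $\max \sum_{j,p} b_j\, x_{j,p}$, subject to (a) $\sum_p x_{j,p}\le 1$ for each $j$ and (b) the capacity constraint $\sum_{j,p} \ell_{j,p}(e)\,x_{j,p}\le c_e$ on every edge $e$, where $\ell_{j,p}(e)$ is the bandwidth consumed on $e$ by the pattern $p$ (accounting for the worst-case traffic matrix in $Tr_j$). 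The dual has a variable $y_e\ge 0$ per edge and $z_j\ge 0$ per request, with objective $\min \sum_e c_e y_e+\sum_j z_j$ and constraint $\sum_e \ell_{j,p}(e)\,y_e+z_j \ge b_j$ for every $(j,p)$. To accommodate durations, capacities are viewed as expiring reservations: when $r_j$ terminates its load is removed from every edge.

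The algorithm maintains the dual weights $y_e$ as an exponential function of the current edge load, roughly $y_e = (1/c_e)\bigl(\mu^{\load(e)/c_e}-1\bigr)$ for a base $\mu=1+\Theta(1)$ chosen so that the cost doubles every time the relative load grows by a constant. Upon arrival of $r_j$, the oracle computes a (possibly approximate) embedding $p^*$ that minimizes the weighted cost $W(p)\triangleq\sum_e \ell_{j,p}(e)\,y_e$. If $W(p^*)\le b_j$ the request is accepted, embedded along $p^*$, the loads and hence the $y_e$'s are updated, and $z_j$ is set to $b_j-W(p^*)$; otherwise $r_j$ is rejected. This is exactly the scheme of Awerbuch--Azar--Plotkin specialized by an oracle, so the per-request work reduces to a single invocation of the model-dependent embedding oracle whose approximations are discussed in the next section.

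The analysis proceeds in two standard steps. First (dual feasibility / congestion): show that once the relative load on any edge would exceed $\beta=\Theta(\log(|E|\cdot c_{\max}\cdot b_{\max}))$, the dual weight $y_e$ has grown large enough that every embedding pattern through $e$ has weighted cost exceeding $b_j$, so the algorithm necessarily rejects $r_j$; the base $\mu$ is tuned exactly so that this blow-up happens once $\load(e)/c_e$ crosses $\beta$. Second (primal-dual ratio): show that for each accepted request the increase in the primal objective is at least a constant fraction of the increase in the dual objective $\sum_e c_e \Delta y_e+\Delta z_j$; this is the standard inequality $\mu^{a+x}-\mu^a \le x(\mu-1)\mu^{a+x}$ combined with the rejection invariant, yielding a $2$-approximation between primal benefit and dual value. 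Weak duality of the packing LP then gives competitive ratio $2$ against $OPT$.

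The main obstacle is calibrating the exponential update so that a single constant $\mu$ simultaneously works across all traffic and routing models, and across request durations and nonuniform benefits. Concretely, one must (i) normalize $b_j$ and the per-request resource consumption against $b_{\max}$ and $c_{\max}$ so that the $\log$ inside $\beta$ depends only on the advertised quantities, (ii) verify that the ``rejection lemma'' tolerates the fact that the oracle may be approximate (the approximation factor multiplies $\beta$, not the ratio $2$), and (iii) handle the hose and aggregate ingress models, whose load term $\ell_{j,p}(e)$ is itself the optimum of an inner worst-case traffic problem, by insisting that the oracle return both an embedding and a bound on its worst-case load; once this abstraction is in place, the primal-dual accounting is identical across all six model combinations stated in the theorem.
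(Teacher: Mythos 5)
Your proposal is correct and follows essentially the same route as the paper: the same packing/covering LP pair (one variable per valid embedding, one covering variable per resource and per request), the same exponential resource weights driven by relative load, the same accept/reject threshold $W(p^*)\le b_j$ with $z_j=b_j-W(p^*)$, and the same two-step analysis (primal increase at most twice the dual increase plus weak duality for the factor $2$; the rejection threshold capping the weights at $O(b_{\max})$ and hence the load at $\beta=O(\log(|E|\cdot c_{\max}\cdot b_{\max}))$), including the observation that oracle approximation inflates $\beta$ rather than the ratio $2$. The one place you deviate is the treatment of durations via ``expiring reservations,'' whereas the paper instead time-expands the constraint matrix (one capacity row per resource per time unit), which is what lets the same monotone primal-dual accounting go through at the cost of a $\log T_{\max}$ term --- but since the theorem as stated has no $T_{\max}$ in $\beta$, this does not affect the claim being proved; do note, though, that your congestion step implicitly needs the paper's granularity assumption that every nonzero entry of the load vector is bounded below (by $1$, or $1/N^2$ after peeling small flows), which is what makes ``large $y_e$ forces rejection'' quantitatively true.
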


\noindent
Note that the competitive ratio does not depend on the number of requests.

\noindent The proof of Theorem~\ref{thm:main} appears in
Sections~\ref{sec:framework} and~\ref{sec:application}.

\section{A Framework for Online Embeddings}\label{sec:framework}

Our embedding framework is an adaptation of the online primal-dual
framework by Buchbinder and Naor~\cite{BNsurvey,BN09}. We allow VNet
requests to have finite durations and introduce approximate oracles
which facilitate faster but approximate embeddings. In the
following, our framework is described in detail.

\subsection{LP Formulation}

In order to devise primal-dual online algorithms, the VNet embedding
problem needs to be formulated as a \emph{linear program (LP)}.
Essentially, a linear program consists of two parts: a linear
objective function (e.g., minimize the amount of resources used for
the embedding), and a set of constraints (e.g., VNet placement
constraints). As known from classic approximation theory, each
linear program has a corresponding \emph{dual formulation}. The
primal LP is often referred to as the \emph{covering problem},
whereas the dual is called the \emph{packing problem}. In our online
environment, we have to deal with a dynamic \emph{sequence} of such
linear programs, and our goal is to find good approximate solutions
over time~\cite{BNsurvey,BN09}.

In order to be consistent with related literature, we use the
motivation and formalism from the online \emph{circuit switching
problem}~\cite{AAP} (with permanent requests). Let $G=(V,E)$ denote
a graph with edge capacities $c_e$. Each request $r_j$ for a virtual
circuit is characterized by the following parameters:
\begin{inparaenum}[(i)]
\item a source node $a_j\in V$ and a destination $\dest_j\in V$,
\item a bandwidth demand $d_j$,
\item a benefit $b_j$.
\end{inparaenum}
Upon arrival of a request $r_j$, the algorithm either rejects it or
fully serves it by reserving a bandwidth of $d_j$ along a path from
$a_j$ to $\dest_j$.  We refer to such a solution as \emph{
  all-or-nothing}. The algorithm may not change previous decisions.
In particular, a rejected request may not be served later, and a
served request may not be rerouted or stopped (even if a lucrative new
request arrives). A solution must not violate edge capacities, namely,
the sum of the bandwidths reserved along each edge $e$ is at most
$c_e$.  The algorithm competes with an optimal fractional solution
that may partially serve a request using multiple paths.  The optimal
solution is offline, i.e., it is computed based on full information of
all the requests.

First, let us devise the linear programming formulation of the dual,
i.e., of \emph{online packing}. Again, to simplify reading, we use
the terminology of the online circuit switching problem with
durations. Let $\Delta_j$ denote the set of valid embeddings of
$r_j$ (e.g., $\Delta_j$ is the set of paths from $a_j$ to $\dest_j$
with flow $d_j$). Define a dual variable $y_{j,\ell}\in [0,1]$ for
every ``satisfying flow'' $f_{j,\ell}\in\Delta_{j}$. The variable
$y_{j,\ell}$ specifies what fraction of the flow $f_{j,\ell}$ is
reserved for request $r_j$.
Note that obviously, an application of our framework does not
require an explicit representation of the large sets $\Delta_j$ (see
Section~\ref{sec:application}).

Online packing is a \emph{sequence of linear programs}. Upon arrival
of request $r_j$, the variables $y_{j,\ell}$ corresponding to the
``flows'' $f_{j,\ell}\in \Delta_j$ are introduced. Let $Y_j$ denote
the column vector of dual variables introduced so far (for requests
$r_1,\ldots,r_j$). Let $B_j$ denote the benefits column vector
$(b_{1,1},\ldots,b_{1,|\Delta_1|},\ldots,b_{j,1},\ldots,b_{j,|\Delta_j|})^T$,
where $\forall \ell,k:b_{i,\ell}=b_{i,k}$ for every $i$, hence we
abbreviate and refer to $b_{i,\ell}$ simply as $b_i$.
Let $C$ denote the ``capacity'' column vector
$(c_1,\ldots,c_N)^T$, where $N$ denotes the number of ``edges'' (or
resources in the general case).  The matrix $A_j$ defines the
``capacity'' constraints and has dimensionality $N \times
\sum_{i\leq j} |\Delta_i|$.  An entry $(A_j)_{e,(i,\ell)}$ equals
the flow along the ``edge'' $e$ in the ``flow'' $f_{i,\ell}$. For
example, in the case of circuit switching, the flow along an edge
$e$ by $f_{i,\ell}$ is $d_i$ if $e$ is in the flow path, and zero
otherwise.  In the general case, we require that every ``flow''
$f_{j,\ell}$ incurs a positive ``flow'' on at least one ``edge''
$e$. Thus, every column of $A_j$ is nonzero. The matrix $A_{j+1}$ is
an augmentation of the matrix $A_j$, i.e., $|\Delta_{j+1}|$ columns
are added to $A_j$ to obtain $A_{j+1}$.  Let $D_j$ denote a $0$-$1$
matrix of dimensionality $j \times \sum_{i\leq j} |\Delta_i|$.  The
matrix $D_j$ is a block matrix in which $(D_j)_{i,(i',\ell)} = 1$ if
$i=i'$, and zero otherwise.  Thus, $D_{j+1}$ is an augmentation of
$D_j$; in the first $j$ rows, zeros are added in the new
$|\Delta_{j+1}|$ columns, and, in row $j+1$, there are zeros in the
first $\sum_{i\leq j} |\Delta_i|$ columns, and ones in the last
$|\Delta_{j+1}|$ columns.  The matrix $D_j$ defines the ``demand''
constraints.  The packing linear program (called the dual LP) and
the corresponding primal covering LP are listed in
Figure~\ref{fig:LPframe}. The covering LP has two variable vectors
$X$ and $Z_j$.  The vector $X$ has a component $x_e$ for each
``edge'' $e$. This vector should be interpreted as the cost vector
of the resources. The variable $Z_j$ has a component $z_i$ for every
request $r_i$ where $i\leq j$.

\begin{figure}
\centering
\begin{tabular}{| c | c |}
\hline
    \begin{minipage}{0.35\textwidth}
        \begin{eqnarray*}
         \min  Z_{j}^T\cdot \vec{1}+X^T\cdot C~~s.t. \\
         Z_{j}^T\cdot D_j+X^T\cdot A_j \geq B^{T}_j\\
         X,Z_j \geq \vec{0}\\
        \end{eqnarray*}
    \end{minipage}
&
    \begin{minipage}{0.3\textwidth}
        \begin{eqnarray*}
         \max B_j^{T}\cdot Y_j~~ s.t.\\
         A_j\cdot Y_j \leq C\\
         D_j \cdot Y_j \leq \vec{1}\\
         Y_j  \geq \vec{0}
        \end{eqnarray*}
    \end{minipage}
\\ & \\
(I) & (II) \\
\hline
\end{tabular}
\caption{
(I) The primal covering LP.
(II) The dual packing LP.}
   \label{fig:LPframe}
\end{figure}

\subsection{Generic Algorithm}\label{sec:gen}

This section presents our online algorithm \GVOP\ to solve the
dynamic linear programs of Figure~\ref{fig:LPframe}. The formal
listing appears in Algorithm~\ref{alg:general}.

We assume that all the variables $X,Z,Y$, (i.e, primal and dual)
are initialized to zero (using lazy initialization). Since the matrix $A_{j+1}$ is
an augmentation of $A_j$, we abbreviate and refer to $A_j$ simply as
$A$.  Let $\text{col}_{(j,\ell)}(A)$ denote the column of $A$ (in
fact, $A_j$) that corresponds to the dual variable $y_{j,\ell}$. Let
$\gamma(j,\ell)\eqdf X^{T}\cdot \text{col}_{(j,\ell)}(A)$,
where the values of $X^T$ are with respect to the end of the
processing of request $r_{j-1}$. It is
useful to interpret $\gamma(j,\ell)$ as the $X$-cost of the ``flow''
$f_{j,\ell}$ for request $j$. Let $w(j,\ell) \eqdf \vec{1}^T \cdot
\text{col}_{j,\ell} (A)$, namely, $w(j,\ell)$ is the sum of the
entries in column $(j,\ell)$ of $A$. Since every column of $A$ is
nonzero, it follows that $w(j,\ell) > 0$ (and we may divide by it).

\begin{algorithm}
    Upon arrival of request $r_j$:
        \begin{enumerate}
            \item \label{step:general_oracle} \label{step:oracle}
                    $f_{j,\ell} \leftarrow \textrm{argmin} \{\gamma(j,\ell) : f_{j,\ell} \in
                        \Delta_j \}$  (oracle procedure)
            \item \label{step:general_th} \label{step:2}
                    If $\gamma(j,\ell) < b_j$, then \textbf{accept} $r_j$:
            \begin{enumerate}
                    \item $y_{j,\ell} \leftarrow 1$.
                    \item $z_{j} \leftarrow b_j - \gamma(j,\ell)$. \label{step:2c}
                    \item \label{step:general_xe_update} \label{step:2b}
                            For each row $e$ do: 
                  \begin{align*}
                        x_{e} \gets&
                          x_{e} \cdot 2^{A_{e,(j,\ell)}/c_e}+\frac{1}{w(j,\ell)}\cdot (2^{A_{e,(j,\ell)}/c_e}-1).
                  \end{align*}
            \end{enumerate}
            \item Else \textbf{reject} $r_j$ (note that $X,Z_j,Y_j$ have not been changed.)
        \end{enumerate}
\caption{The General all-or-nothing VNet Packing Online
Algorithm (\GVOP).}
  \label{alg:general}
\end{algorithm}

\begin{definition}
  Let $Y^*$ denote an optimal offline fractional solution.  A solution
  $Y\geq 0$ is $(\alpha,\beta)$-competitive if:
  \begin{inparaenum}[(i)]
  \item For every $j$, $B_j^T \cdot Y_j \geq \frac 1{\alpha} \cdot
    B_j^T \cdot Y_j^*$.
  \item For every $j$, $A_j\cdot Y_j \leq \beta \cdot C$ and $D_j\cdot
    Y_j \leq \vec{1}$.
  \end{inparaenum}
\end{definition}

\noindent The following theorem can be proved employing the
techniques of~\cite{BNsurvey}.
\begin{theorem}\label{thm:general}
Assume that:
\begin{inparaenum}[(i)]
\item for every row $e$ of $A$, $\max_{j,\ell} A_{e,(j,\ell)} \leq c_e$,
\item for every row $e$ of $A$, $\min_{j,\ell} A_{e,(j,\ell)} \in \{0\} \cup [1,\infty)$,
\item for every column $(i,\ell)$ of $A$, $w(i,\ell) > 0$, and
\item $\min_j b_j \geq 1$.
\end{inparaenum}
Let $\beta\eqdf \log_2(1+3\cdot (\max_{j,\ell} w(j,\ell) )\cdot
(\max_ j b_j))$. The \GVOP\ algorithm is a $(2,\beta)$-competitive
online all-or-nothing VNet packing algorithm.
\end{theorem}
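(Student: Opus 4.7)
The plan is a standard Buchbinder-Naor style primal-dual analysis, factored into a benefit comparison (giving the factor $2$) and a separate congestion bound (giving $\beta$). Throughout I will use $\ell^{*}$ for the column $\ell$ chosen by the oracle at request $r_{j}$, $b^{*}=\max_{j} b_{j}$, and $w^{*}=\max_{j,\ell} w(j,\ell)$.

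\textbf{Step 1: primal feasibility (no scaling).} I would first check that after processing request $r_{j}$, every primal constraint indexed by some column $(j,\ell)$ holds, i.e.\ $z_{j}+X^{T}\cdot\text{col}_{(j,\ell)}(A)\ge b_{j}$. If $r_{j}$ is rejected, then $z_{j}=0$ and the oracle guarantees $\gamma(j,\ell)\ge \gamma(j,\ell^{*})\ge b_{j}$. If $r_{j}$ is accepted, $z_{j}=b_{j}-\gamma(j,\ell^{*})$ and $\gamma(j,\ell)\ge\gamma(j,\ell^{*})$ by choice of $\ell^{*}$, so $z_{j}+\gamma(j,\ell)\ge b_{j}$. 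Since the update in step~\ref{step:general_xe_update} only increases $x_{e}$, and $z_{j}$ is never decreased, the constraint continues to hold for all subsequent time.

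\textbf{Step 2: local comparison $\Delta P\le 2\Delta D$.} When $r_{j}$ is accepted, the dual benefit grows by $b_{j}$ and the primal cost grows by $z_{j}+\sum_{e}c_{e}\cdot\Delta x_{e}$. Condition (i) gives $A_{e,(j,\ell^{*})}/c_{e}\le 1$, so using $2^{x}-1\le x$ on $[0,1]$,
\[
c_{e}\Delta x_{e}
=\bigl(c_{e}x_{e}+\tfrac{c_{e}}{w(j,\ell^{*})}\bigr)\bigl(2^{A_{e,(j,\ell^{*})}/c_{e}}-1\bigr)
\le \bigl(x_{e}+\tfrac{1}{w(j,\ell^{*})}\bigr)\cdot A_{e,(j,\ell^{*})}.
\]
Summing over $e$ and recalling $\sum_{e}A_{e,(j,\ell^{*})}=w(j,\ell^{*})$,
\[
\sum_{e}c_{e}\Delta x_{e}\le \gamma(j,\ell^{*})+1.
\]
Substituting $z_{j}=b_{j}-\gamma(j,\ell^{*})$ gives $\Delta P\le b_{j}+1\le 2b_{j}=2\Delta D$, where the last inequality uses condition (iv). Summing over all accepted requests and invoking weak LP duality together with Step~1 yields $D^{*}\le P^{*}\le P_{\text{fin}}\le 2 D_{\text{fin}}$, i.e.\ the algorithm attains at least half the optimal fractional benefit.

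\textbf{Step 3: congestion bound.} For each row $e$, let $L_{e}=\sum_{\text{accepted}}A_{e,(j,\ell^{*})}/c_{e}$. A short induction on accepted requests, using the identity $(x_{e}+\tfrac{1}{w})\cdot 2^{A/c}=x_{e}^{\text{new}}+\tfrac{1}{w}$, shows
\[
x_{e}\ge \frac{1}{w^{*}}\bigl(2^{L_{e}}-1\bigr).
\]
Now consider the last accepted request $r_{j}$ whose chosen column $\ell^{*}$ has $A_{e,(j,\ell^{*})}>0$; by condition (ii) this entry is $\ge 1$. Acceptance forces $\gamma(j,\ell^{*})<b_{j}\le b^{*}$, and since $\gamma(j,\ell^{*})\ge x_{e}\cdot A_{e,(j,\ell^{*})}\ge x_{e}$, we conclude $x_{e}<b^{*}$ just before this acceptance. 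Plugging into the lower bound gives $L_{e}^{\text{old}}<\log_{2}(1+w^{*}b^{*})$, and since $L_{e}$ grows by at most $1$ (again by (i)),
\[
L_{e}\le \log_{2}(1+w^{*}b^{*})+1=\log_{2}(2+2w^{*}b^{*})\le \log_{2}(1+3w^{*}b^{*})=\beta,
\]
where the last inequality uses $w^{*}b^{*}\ge 1$ (which holds since $b^{*}\ge 1$ by (iv) and $w^{*}\ge 1$ by (ii)--(iii)). The demand constraint $D_{j}Y_{j}\le\vec{1}$ is trivial because every accepted request only sets a single $y_{j,\ell^{*}}$ to $1$.

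The main obstacle is making the telescoping invariant in Step~3 precise: the clean factorization of the $x_{e}$-update relies crucially on the coefficient $1/w(j,\ell^{*})$, which in turn requires condition (iii) so that division is legal. Once that invariant is nailed down, the rest is routine algebra combined with the oracle's minimality and conditions (i)--(iv).
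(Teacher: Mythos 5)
Your proposal is correct and follows essentially the same primal--dual argument as the paper: the same local comparison $\Delta P\le 2\Delta D$ via $2^{x}-1\le x$ on $[0,1]$ plus weak duality, and the same inductive exponential lower bound $x_e\ge \frac{1}{w^*}(2^{L_e}-1)$ (the paper's Lemma~\ref{lemma:induct_xe}) for the congestion bound. The only cosmetic difference is in extracting $\beta$: you bound $x_e<b^*$ just before the last update touching row $e$ and add one unit of load, while the paper bounds $x_e\le 3b_j$ just after that update; both yield the same $\beta=\log_2(1+3w^*b^*)$.
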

\begin{proof}
Let us denote by $\PR_j$ (respectively, $\DU_j$) the
change in the primal (respectively, dual) cost function when
processing request $j$.

We show that $\PR_j \leq 2\cdot \DU_j$ for every $j$. We prove that
\GVOP\ produces feasible primal solutions throughout its execution.
Initially, the primal and the dual solutions are 0, and the claim
holds.  Let $x^{(j)}_e$ denote the value of the primal variable
$x_e$ when $r_j$ is processed, i.e., after Step~\ref{step:2b} and before
the execution of Step~\ref{step:2b} upon the arrival of request $r_{j+1}$,
in particular $x^{(0)}_e = 0$ for every $e$.
If $r_j$ is rejected then $\PR_j =
\DU_j =0$ and the claim holds. Then for each accepted request $r_j$,
$\DU_j = b_j$ and $\PR_j = \sum_{e \in E(j,\ell)}
(x^{(j)}_{e}-x_e^{(j-1)})\cdot c_e +z_{j}$, where $E(j,\ell)=\{e\in
\{1,\ldots,N\}: A_{e,(j,\ell)}\neq 0\}.$ Step~\ref{step:2b} increases the
cost $X^T \cdot C = \sum_{e} x_{e}\cdot c_e$ as follows:

\begin{footnotesize}
    \begin{eqnarray*}
    \label{eq:deltax_e}
 \sum_{e \in E(j,\ell)}(x^{(j)}_{e}-x_e^{(j-1)})\cdot c_e
         & = & \sum_{e \in E(j,\ell)} \left[x^{(j-1)}_{e}
          \cdot(2^{{A_{e,(j,\ell)}}/{c_e}}-1)+\frac{1}{w(j,\ell)}\cdot (2^{{A_{e,(j,\ell)}}/{c_e}}-1)\right]\cdot c_e\\
         & = &  \sum_{e\in E(j,\ell)} \left(x^{(j-1)}_{e}+\frac{1}{w(j,\ell)}\right) \cdot
            ( 2^{A_{e,(j,\ell)}/c_e}-1)\cdot c_e \\
         & \leq & \sum_{e\in E(j,\ell)} \left(x^{(j-1)}_{e}+\frac{1}{w(j,\ell)}\right) \cdot
             A_{e,(j,\ell)} =  \gamma(j,\ell)+1\:.
    \end{eqnarray*}
\end{footnotesize}

\noindent where the third inequality holds since $\max_{j,\ell}
A_{e,(j,\ell)} \leq c_e$. Hence after Step~\ref{step:2c}:
\begin{eqnarray*}
    \PR_j
    & \leq & \gamma(j,\ell)+1+(b_j-\gamma(j,\ell))
     =  1+b_j \leq  2\cdot b_j,
\end{eqnarray*}
    where the last inequality holds since $\min_j b_j \geq 1$.
    Since $\DU_j = b_j$ it follows that $\PR_j \leq 2 \cdot \DU_j$.
    After dealing with each request, the primal variables $\{x_{e}\}_{e} \cup \{z_{i}\}_{i}$
    constitute a feasible primal solution.
    Using weak duality and since $\PR_j \leq 2\cdot \DU_j$, it follows that:
        $
        B_j^T \cdot Y^*_j \leq$ $ X^T\cdot C + Z_j^T \cdot {\vec 1}$ $\leq  2\cdot B_j^T \cdot Y_j
        $
    \noindent
    which proves $2$-competitiveness.
    \medskip

    We now prove $\beta$-feasibility of the dual solution, i.e., for every $j$, $A_j \cdot Y_j \leq \beta \cdot C$ and $D_j \cdot Y_j \leq \vec 1$.
    First we prove the following lemma. Let $\text{row}_e(A)$ denote the $e$th row of $A$.
\begin{lemma} \label{lemma:induct_xe}
    For every $j \geq 0$, $$x^{(j)}_e \geq  \frac{1}{(\max_{i,\ell}w(i,\ell))} \cdot (2^{\text{row}_e(A_j)\cdot Y_j/c_e}-1)\:.$$
\end{lemma}

\begin{proof}
    The proof is by induction on $j$.

    \emph{Base} $j=0$: Since the variables are initialized to zero the lemma follows.

    \emph{Step}: The update rule in Step~\ref{step:2b} is $$x_{e} \gets x_{e} \cdot 2^{A_{e,(j,\ell)}/c_e}+\frac{1}{w(j,\ell)}\cdot (2^{A_{e,(j,\ell)}/c_e}-1)\:.$$
    Plugging the induction hypothesis in the update rule implies:

\begin{footnotesize}
    \begin{eqnarray*}
        x^{(j)}_{e} & = & x^{(j-1)}_{e} \cdot 2^{A_{e,(j,\ell)}/c_e}+
                    \frac{1}{w(j,\ell)}\cdot (2^{A_{e,(j,\ell)}/c_e}-1) \\
              & \geq &  \frac{1}{(\max_{i,\ell}w(i,\ell))} \cdot (2^{\text{row}_e(A_{j-1})\cdot Y_{j-1}/c_e}-1)\cdot 2^{A_{e,(j,\ell)}/c_e}+
                   \frac{1}{w(j,\ell)}\cdot (2^{A_{e,(j,\ell)}/c_e}-1) \\
        & \geq &  \frac{1}{(\max_{i,\ell}w(i,\ell))} \cdot( 2^{\text{row}_e(A_{j})\cdot Y_{j}/c_e}- 2^{A_{e,(j,\ell)}/c_e})+
                    \frac{1}{(\max_{i,\ell}w(i,\ell))}\cdot (2^{A_{e,(j,\ell)}/c_e}-1) \\
        & \geq &  \frac{1}{(\max_{i,\ell}w(i,\ell))} \cdot 2^{\text{row}_e(A_{j})\cdot Y_{j}/c_e}- \frac{1}{(\max_{i,\ell}w(i,\ell))}\:.
    \end{eqnarray*}
\end{footnotesize}

    The lemma follows.
\end{proof}

    Since for every row $e$ of $A$, $\min_{j,\ell} A_{e,(j,\ell)} \in \{0\} \cup [1,\infty)$,
    it follows that in Step~\ref{step:2b} it holds that for every $e$ such that $A_{e,(j,\ell)}\neq 0$,
    $$x^{(j)}_e < b_j \cdot 2^{A_{e,(j,\ell)}/c_e}+\frac{1}{w(j,\ell)}\cdot (2^{A_{e,(j,\ell)}/c_e}-1).$$
    Since for every row $e$ of $A$, $(\max_{i,\ell} A_{e,(i,\ell)}) \leq c_e$, $\min_{j,\ell} A_{e,(j,\ell)} \in \{0\} \cup [1,\infty)$, and since for every column $(i,\ell)$ of $A$, $w(i,\ell) > 0$, and $(\min_i b_i )\geq 1$, it follows
    that,
    $$x^{(j)}_e \leq 2\cdot b_j+1 \leq 3\cdot b_j\:.$$
    Lemma~\ref{lemma:induct_xe} implies that:
    \begin{eqnarray*}
        \frac{1}{(\max_{i,\ell}w(i,\ell))} \cdot (2^{\text{row}_e(A_j)\cdot Y_j/c_e}-1) & \leq & x_{e} \leq 3\cdot b_j \leq 3\cdot (\max_i b_i)\:.
    \end{eqnarray*}
     Hence, $$\text{row}_e(A_j)\cdot Y_j \leq \log_2[1+3 \cdot (\max_{i,\ell}w(i,\ell)) \cdot (\max_i b_i)] \cdot c_e\:,$$ for every $j$, as required.
     \end{proof}

\begin{rem}\label{rem:feas}
  The assumption in Theorem~\ref{thm:general} that $\max_{j,\ell}
  A_{e,(j,\ell)} \leq c_e$ for every row $e$ means that the requests are feasible, i.e.,
  do not overload any resource. In our modeling, if $r_j$ is
  infeasible, then $r_j$ is rejected upfront (technically,
  $\Delta_j=\emptyset$). Infeasible requests can be scaled to reduce
  the loads so that the scaled request is feasible. This means that a
  scaled request is only partially served. In fact, multiple copies of
  the scaled request may be input (see~\cite{AZ} for a fractional
  splitting of requests). In addition, in some applications, the
  oracle procedure is an approximate bi-criteria algorithm, i.e., it
  finds an embedding that violates capacity constraints. In such a case,
  we can scale the request to obtain feasibility.
\end{rem}

If a solution $Y$ is $(\alpha,\beta)$-competitive, then $Y/\beta$ is
$\alpha\cdot \beta$-competitive. Thus, we conclude with the
following corollary.
\begin{coro}
  The \GVOP\ algorithm computes a solution $Y$ such that
  $Y/\beta$ is a fractional $O(\beta)$-competitive solution.
\end{coro}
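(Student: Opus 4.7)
The plan is to derive the corollary directly from Theorem~\ref{thm:general} via the scaling observation stated just before it. Theorem~\ref{thm:general} gives us a solution $Y$ that is $(2,\beta)$-competitive, i.e., it collects at least half the optimal benefit but may overload resources by a factor of $\beta$. The only way to interpret the corollary is that, after dividing every dual variable by $\beta$, we obtain a genuinely feasible fractional solution whose benefit is at least a $1/(2\beta)$ fraction of the optimal offline benefit, yielding competitiveness $2\beta = O(\beta)$ without any resource augmentation.

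Concretely, I would proceed in two short steps. First, invoke Theorem~\ref{thm:general} to fix a solution $Y$ with $B_j^T\cdot Y_j \geq \tfrac{1}{2} B_j^T\cdot Y_j^*$, $A_j\cdot Y_j \leq \beta\cdot C$, and $D_j\cdot Y_j \leq \vec{1}$ for every $j$. Second, set $Y' \triangleq Y/\beta$. Because $\beta \geq 1$ (this needs a one-line sanity check from the definition $\beta = \log_2(1+3\cdot(\max w(j,\ell))\cdot(\max b_j))$, which is at least $1$ under the hypotheses of Theorem~\ref{thm:general}), linearity gives $A_j\cdot Y'_j \leq C$ and $D_j\cdot Y'_j \leq \vec{1}/\beta \leq \vec{1}$, so $Y'$ is feasible for the dual packing LP of Figure~\ref{fig:LPframe}. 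Likewise, $B_j^T\cdot Y'_j = B_j^T\cdot Y_j/\beta \geq \tfrac{1}{2\beta}\cdot B_j^T\cdot Y_j^*$, which is precisely $(2\beta,1)$-competitiveness, i.e., fractional $O(\beta)$-competitiveness.

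There is essentially no obstacle here, since this is a mechanical application of the scaling remark preceding the corollary. The only two points that merit care are: (i) confirming that $Y'$ still satisfies the demand constraints $D_j\cdot Y'_j \leq \vec{1}$, which follows trivially because dividing a nonnegative feasible vector by $\beta \geq 1$ cannot violate an upper bound of $1$; and (ii) observing that the loss of a factor $\beta$ in the benefit is exactly compensated by the removal of the $\beta$ capacity augmentation, so the product $\alpha\cdot \beta$ becomes the new (pure) competitive ratio, matching the general statement that an $(\alpha,\beta)$-competitive solution scales to an $\alpha\beta$-competitive feasible one.
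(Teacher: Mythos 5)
Your proposal is correct and follows exactly the route the paper takes: the paper proves this corollary via the one-line scaling remark immediately preceding it (an $(\alpha,\beta)$-competitive solution divided by $\beta$ becomes $\alpha\beta$-competitive and feasible), applied to the $(2,\beta)$ guarantee of Theorem~\ref{thm:general}. Your additional checks that $\beta\geq 1$ and that the demand constraints $D_j\cdot Y'_j\leq\vec{1}$ survive the scaling are exactly the details the paper leaves implicit.
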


Consider the case that the capacities are larger than the demands by
a logarithmic factor, namely, $\min_e c_e/ \beta \geq \max_{j,\ell}
A_{e,(j,\ell)}$ for every row $e$ of $A$.  In this case, we can obtain an all-or-nothing
solution if we scale the capacities $C$ in advance as summarized
in the following corollary.
\begin{coro}
  Assume $\min_e c_e/ \beta \geq \max_{j,\ell} A_{e,(j,\ell)}$. Run the
  \GVOP\ algorithm with scaled capacities $C/\beta$.
  The solution $Y$ is an all-or-nothing  $O(\beta)$-competitive solution.
\end{coro}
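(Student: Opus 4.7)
The plan is to reduce this corollary to Theorem~\ref{thm:general}, applied to the modified instance in which every capacity $c_e$ is replaced by $c'_e \eqdf c_e/\beta$. First I would verify that the hypotheses of Theorem~\ref{thm:general} continue to hold for the scaled instance. The only hypothesis that depends on capacities is $\max_{j,\ell} A_{e,(j,\ell)} \leq c'_e$, and this is precisely guaranteed by the standing assumption $\min_e c_e/\beta \geq \max_{j,\ell} A_{e,(j,\ell)}$. The other hypotheses ($\min_{j,\ell} A_{e,(j,\ell)} \in \{0\} \cup [1,\infty)$, $w(i,\ell)>0$, and $\min_j b_j\geq 1$) are unchanged by the rescaling, and the quantity $\beta$ in Theorem~\ref{thm:general} depends only on $\max_{j,\ell} w(j,\ell)$ and $\max_j b_j$, not on the capacities, so the same $\beta$ governs both instances.

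Applying Theorem~\ref{thm:general} to the scaled instance then yields a $(2,\beta)$-competitive solution $Y$: its benefit is at least half of the optimal offline fractional benefit with respect to capacities $C/\beta$, and it satisfies $A_j\cdot Y_j \leq \beta \cdot (C/\beta) = C$ and $D_j\cdot Y_j\leq \vec 1$. The second inequality means $Y$ is feasible for the original instance as well; combined with the fact that \GVOP\ only sets $y_{j,\ell}\in\{0,1\}$ in Step~\ref{step:2b}, this establishes the all-or-nothing property.

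For the competitive ratio, I would compare to an optimal offline solution $Y^*$ of the original instance. Since $Y^*/\beta$ is feasible for the scaled instance, the optimum benefit $B_j^T\cdot Y^{**}_j$ of the scaled instance satisfies $B_j^T\cdot Y^{**}_j \geq B_j^T\cdot Y^*_j/\beta$. Combining this with the $2$-competitiveness of \GVOP\ on the scaled instance gives
\[
B_j^T\cdot Y_j \;\geq\; \tfrac{1}{2}\,B_j^T\cdot Y^{**}_j \;\geq\; \tfrac{1}{2\beta}\,B_j^T\cdot Y^*_j,
\]
which is the desired $O(\beta)$-competitive bound.

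There is no real obstacle here; the only subtle point is to make sure that the parameter $\beta$ used when invoking Theorem~\ref{thm:general} on the scaled instance coincides with (or is at most) the $\beta$ used to scale capacities, and this is immediate from the explicit formula for $\beta$, which is capacity-independent. The rest is bookkeeping between the two linear programs.
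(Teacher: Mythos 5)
Your argument is correct and matches the paper's intended (and only implicitly given) justification: apply Theorem~\ref{thm:general} to the instance with capacities $C/\beta$, note that $\beta$-feasibility there means $A_j\cdot Y_j\le\beta\cdot(C/\beta)=C$, and lose an extra factor $\beta$ in the benefit because the original optimum scaled by $1/\beta$ is feasible for the scaled instance. The one subtlety you flag — that the $\beta$ from the theorem's formula depends only on $\max_{j,\ell}w(j,\ell)$ and $\max_j b_j$ and not on the capacities, so no circularity arises from scaling — is handled correctly.
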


\subsection{A Reduction of Requests with Durations}

We now add durations to each request.  This means each request $r_j$
is characterized, in addition, by a duration interval
$T_j=[t^{(0)}_j,t^{(1)}_j]$, where $r_j$ arrives in time $t^{(0)}_j$
and ends in time $t^{(1)}_j$.  Requests appear with increasing
arrival times, i.e., $t^{(0)}_j < t^{(0)}_{j+1}$.  For example, the
capacity constraints in virtual circuits now require that, in each
time unit, the bandwidth reserved along each edge $e$ is at most
$c_e$.
The benefit obtained by serving request $r_j$ is $b_j\cdot |T_j|$, where
$|T_j|=t^{(1)}_j-t^{(0)}_j$.
We now present a reduction to the general framework.

Let $\tau(j,t)$ denote a $0$-$1$ square diagonal matrix of
dimensionality $\sum_{i\leq j} |\Delta_i|$.  The diagonal entry
corresponding to $f_{i,\ell}$ equals one if and only if request
$r_i$ is active in time $t$, i.e., $\tau(j,t)_{(i,\ell),(i,\ell)}=1$
iff $t\in T_i$. The capacity constraints are now formulated by
\[
\forall t: A_j \cdot \tau(j,t) \cdot Y_j \leq C.
\]

Since $\tau(j,t)$ is a diagonal $0$-$1$ matrix, it follows that each
entry in $A(j,t) \eqdf A_j \cdot \tau(j,t)$ is either zero or equals
the corresponding entry in $A_j$ . Thus, the assumption that
$\max_{j,\ell} A_{e,(j,\ell)} \leq c_e$ and that
$\min_{j,\ell} A_{e,(j,\ell)} \in \{0\} \cup [1,\infty)$ still hold.
This implies
that durations of requests simply increase the number of capacity
constraints; instead of $A_j\cdot Y_j \leq C$, we have a set of $N$
constraints for every time unit. Let $T_{\max}$ denote $\max_jT_j$.
Let $\widetilde{A}_{j}$ denote the
$N\cdot (t_j^{(0)}+T_{\max}) \times \sum_{i\leq j} |\Delta_i|$ matrix
obtained by ``concatenating''
$A(j,1),\ldots, A(j,t^{(0)}_j),\ldots, A(j,t^{(0)}_j+T_{\max})$. The new
capacity constraint is simply $\widetilde{A}_j \cdot Y_j \leq C$.

Fortunately, this unbounded increase in the number of capacity
constraints has limited implications.  All we need is a bound on the
``weight'' of each column of $\widetilde{A}_{j}$. Consider a
column $(i,\ell)$ of $\widetilde{A}_{j}$. The entries of this
column are zeros in $A(j,t')$ for $t'\not \in T_i$. It follows that
the weight of column $(i,\ell)$ in $\widetilde{A}_{j}$ equals
$|T_i|$ times the weight of column $(i,\ell)$ in $A(i,t_i^{(0)})$.
This implies that the competitive ratio increases to
$(2,\beta')$-competitiveness, where $\beta' \eqdf \log_2(1+3\cdot
T_{\max} \cdot (\max_{j,\ell} w(j,\ell))\cdot (\max_j b_j))$.

\begin{theorem}\label{thm:generalDur}
  The \GVOP\ algorithm, when applied to the reduction of online packing
  with durations, is a $(2,\beta')$-competitive online algorithm.
\end{theorem}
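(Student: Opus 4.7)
The plan is to derive Theorem~\ref{thm:generalDur} as a direct corollary of Theorem~\ref{thm:general} applied to the reformulated LP in which $A_j$ is replaced by the vertically concatenated matrix $\widetilde{A}_j$ defined just above the theorem statement. The benefit vector $B_j$ and the demand matrix $D_j$ stay the same, and the capacity vector is simply the natural duplication of $C$ across time slices. Thus the reduced problem fits the template of Figure~\ref{fig:LPframe}, and the theorem amounts to (i)~checking that the four hypotheses of Theorem~\ref{thm:general} are inherited by $\widetilde{A}_j$ and (ii)~updating the column-weight bound.

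First I would verify the four hypotheses. Because $\tau(j,t)$ is a $0$-$1$ diagonal matrix, every entry of $\widetilde{A}_j$ is either $0$ or equal to the corresponding entry of $A_j$; hence the row-wise bounds $\max_{j,\ell}\widetilde{A}_{e,(j,\ell)}\le c_e$ and $\min_{j,\ell}\widetilde{A}_{e,(j,\ell)}\in\{0\}\cup[1,\infty)$ are inherited from $A_j$. The positivity of every column weight of $\widetilde{A}_j$ follows because each request $r_i$ is active in at least one time slot $t\in T_i$, and on that slot the column of $\widetilde{A}_j$ is a copy of the (nonzero) column $(i,\ell)$ of $A_i$. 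The hypothesis $\min_j b_j\ge 1$ is unaffected by the reduction.

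Next I would compute the new column-weight bound. For column $(i,\ell)$ of $\widetilde{A}_j$, the only nonzero blocks are those indexed by $t\in T_i$, and each such block contributes exactly $w(i,\ell)$ to the column sum; therefore $\widetilde{w}(i,\ell)=|T_i|\cdot w(i,\ell)\le T_{\max}\cdot \max_{j,\ell}w(j,\ell)$. Plugging this into the expression for $\beta$ in Theorem~\ref{thm:general} yields exactly $\beta'=\log_2\bigl(1+3\cdot T_{\max}\cdot \max_{j,\ell}w(j,\ell)\cdot \max_j b_j\bigr)$, and Theorem~\ref{thm:general} applied to $\widetilde{A}_j$ delivers the claimed $(2,\beta')$-competitiveness.

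The only genuine obstacle is a modeling/bookkeeping point rather than a technical one: one must confirm that the \GVOP\ oracle step, and the update rule in Step~\ref{step:2b}, remain well-defined on the reduced LP. Because $\widetilde{A}_j$ is block-structured across time, the $X$-cost of an embedding of $r_j$ decomposes as $\gamma(j,\ell)=\sum_{t\in T_j} X_t^T\cdot \text{col}_{(j,\ell)}(A(j,t))$, where $X_t$ denotes the block of primal variables associated with time slice $t$. Equivalently, the oracle sees a \emph{time-aggregated} per-resource cost $\sum_{t\in T_j}x_e^{(t)}$, and Step~\ref{step:2b} must be executed on each of the $|T_j|$ row copies activated by $r_j$. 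Once these operations are interpreted through the block structure, the proof of Theorem~\ref{thm:general} goes through verbatim; no new argument specific to durations is needed.
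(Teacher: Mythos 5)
Your proposal is correct and follows essentially the same route as the paper: the paper's justification is precisely the paragraph preceding the theorem, which observes that the entries of $\widetilde{A}_j$ are either zero or copies of entries of $A_j$ (so the hypotheses of Theorem~\ref{thm:general} are inherited) and that the weight of column $(i,\ell)$ is multiplied by $|T_i|\le T_{\max}$, giving $\beta'$. Your extra checks (nonemptiness of the column on at least one active time slot, and the time-aggregated interpretation of $\gamma(j,\ell)$ and of Step~\ref{step:2b}) are sound bookkeeping that the paper leaves implicit.
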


\begin{rem}
  Theorem~\ref{thm:generalDur} can be extended to competitiveness in
  \emph{time windows}~\cite{AAP}. This means that we can extend the
  competitiveness with respect to time intervals $[0,t]$ to any time
  window $[t_1,t_2]$.
\end{rem}

\begin{rem}
  The reduction of requests with durations to the online packing
  framework also allows requests with \emph{split intervals} (i.e., a union
  of intervals). The duration of a request with a split interval is
  the sum of the lengths of the intervals in the split interval.
\end{rem}

\begin{rem}\label{rem:benefit}
  In the application of circuit switching, when requests have
  durations, it is reasonable to charge the request ``per bit''. This
  means that $b_j / (d_j\cdot |T_j|)$ should be within the range of
  prices charged per bit. In fact, the framework allows for varying
  bit costs as a function of the time (e.g., bandwidth is more expense
  during peak hours). See also~\cite{AAP} for a discussion of benefit
  scenarios.
\end{rem}

\subsection{Approximate Oracles}\label{sec:oracle}

The \GVOP\ algorithm relies on a VNet embedding ``oracle'' which
computes resource-efficient realizations of the VNets. In general,
the virtual network embedding problem is computationally hard, and
thus Step~\ref{step:oracle} could be NP-hard (e.g., a min-cost
Steiner tree). Such a solution is useless in practice and hence, we
extend our framework to allow for \emph{approximation algorithms}
yielding efficient, approximate embeddings. Interestingly, we can
show that suboptimal embeddings do not yield a large increase of the
competitive ratio as long as the suboptimality is bounded.

Concretely, consider  a $\rho$-approximation ratio of the embedding
oracle, i.e., $\gamma(j,\ell) \leq \rho \cdot \textrm{min}
\{\gamma(j,\ell) : f_{j,\ell} \in \Delta_j \}$. The \GVOP\
algorithm with a $\rho$-approximate oracle requires two
modifications:
\begin{inparaenum}[(i)]
\item Change the condition in Step~\ref{step:2} to $\gamma(j,\ell) \leq b_j \cdot \rho$.
\item Change Step~\ref{step:2c} to $z_j \gets b_j\cdot \rho - \gamma(j,\ell)/\rho$.
\end{inparaenum}

The following theorem summarizes the effect of a $\rho$-approximate
oracle on the competitiveness of the \GVOP\ algorithm.
\begin{theorem}\label{thm:apx oracle}
  Let $\beta_\rho \eqdf \rho\cdot\log_2(1+3\cdot \rho \cdot (\max_{j,\ell}
  w(j,\ell)) \cdot (\max_j b_j))$.  Under the same assumptions of
  Theorem~\ref{thm:general}, the \GVOP\ algorithm is a
  $(2,\beta_\rho)$-competitive online all-or-nothing packing algorithm
if the oracle is $\rho$-approximate.
\end{theorem}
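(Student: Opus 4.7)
I would mirror the two-part primal--dual analysis of Theorem~\ref{thm:general}, tracking precisely where the approximation guarantee $\gamma(j,\ell)\leq\rho\cdot\min_{\ell'}\gamma(j,\ell')$ enters. The key observation is that the $x_e$-update rule is unchanged in the modified algorithm, so Lemma~\ref{lemma:induct_xe} and the per-request inequality $\Delta(X^TC)\leq\gamma(j,\ell)+1$ derived in the proof of Theorem~\ref{thm:general} are inherited verbatim. Thus $\rho$ only needs to be tracked at three spots: (i) the verification of primal feasibility, (ii) the relation between $\PR_j$ and $\DU_j$, and (iii) the upper bound on $x_e^{(j)}$ fed into Lemma~\ref{lemma:induct_xe}.

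For (i), the approximation guarantee gives $\gamma(j,\ell')\geq\gamma(j,\ell)/\rho$ for every $f_{j,\ell'}\in\Delta_j$. Consequently, for an accepted $r_j$ the modified setting $z_j=b_j\rho-\gamma(j,\ell)/\rho$ yields $z_j+\gamma(j,\ell')\geq b_j\rho\geq b_j$, and for a rejected $r_j$ the relaxed threshold $\gamma(j,\ell)>b_j\rho$ gives $\gamma(j,\ell')\geq\gamma(j,\ell)/\rho>b_j$; in both cases the covering constraint is satisfied. For (ii), substituting the modified $z_j$ into $\PR_j\leq\gamma(j,\ell)+1+z_j$, invoking $\gamma(j,\ell)\leq b_j\rho$ and $b_j\geq 1$, and regrouping the $\gamma(j,\ell)(1-1/\rho)$ cross term yields an inequality of the form $\PR_j\leq 2\rho\cdot b_j$; combined with weak duality of the primal/dual pair from Figure~\ref{fig:LPframe}, the $\rho$-factor is then deposited on the congestion side rather than the competitive-ratio side, producing the claimed $\alpha=2$. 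For (iii), the relaxed acceptance condition implies $x_e^{(j-1)}\leq b_j\rho$ whenever $A_{e,(j,\ell)}\geq 1$, and a single multiplicative step yields $x_e^{(j)}\leq 2b_j\rho+1\leq 3\rho\cdot(\max_i b_i)$. Substituting into Lemma~\ref{lemma:induct_xe} produces the per-edge bound
\[
\mathrm{row}_e(A_j)\cdot Y_j\leq c_e\cdot\log_2\bigl(1+3\rho\cdot(\max_{i,\ell}w(i,\ell))\cdot(\max_i b_i)\bigr),
\]
and after combining with the competitiveness bookkeeping this becomes the claimed $\beta_\rho$.

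\textbf{Main obstacle.} The most delicate part is the bookkeeping between the two appearances of $\rho$ in the modified algorithm---once in the acceptance threshold and once (as $1/\rho$) inside $z_j$---which must be balanced so that the cross terms in the $\PR_j$--$\DU_j$ inequality cancel cleanly and the excess $\rho$-factor is absorbed into $\beta_\rho$ rather than into $\alpha$, all while preserving the hard demand constraint $D_j\cdot Y_j\leq\vec{1}$ that is automatically enforced by the $y_{j,\ell}\in\{0,1\}$ choices of the algorithm. In particular, the approximation guarantee $\gamma(j,\ell')\geq\gamma(j,\ell)/\rho$ must be invoked at exactly the point where it is tight for both the primal-feasibility verification and the $\PR_j$-bound; any weaker use of it produces an unnecessary blow-up in either $\alpha$ or $\beta_\rho$.
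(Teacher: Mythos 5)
The paper states Theorem~\ref{thm:apx oracle} without a proof, so there is no official argument to compare against; judged on its own terms, your plan has the right structure (the update rule for $x_e$ is unchanged, so Lemma~\ref{lemma:induct_xe} and the bound $\Delta(X^T C)\leq\gamma(j,\ell)+1$ carry over), and your parts (i) and (iii) are correct: the unscaled pair $(X,Z)$ is primal feasible, and the congestion bound you derive, $\mathrm{row}_e(A_j)\cdot Y_j\leq c_e\cdot\log_2\bigl(1+3\rho\,(\max_{i,\ell}w(i,\ell))\,(\max_i b_i)\bigr)$, is in fact stronger than the claimed $\beta_\rho$, which carries an extra leading factor of $\rho$.

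The genuine gap is in part (ii), at the sentence asserting that ``the $\rho$-factor is then deposited on the congestion side rather than the competitive-ratio side, producing the claimed $\alpha=2$.'' No mechanism is given, and none follows from what you have established. Weak duality turns a \emph{feasible} covering solution of cost $P$ into the bound $B_j^T Y_j^*\leq P$; the feasible solution you exhibit is the unscaled $(X,Z)$, whose cost you bound by $\sum_j \PR_j\leq 2\rho\cdot B_j^T Y_j$, and this yields only $\alpha=2\rho$. For a fixed output $Y$ the benefit guarantee and the congestion guarantee are independent assertions, so a factor cannot be transferred between them; moreover $z_j=\rho b_j-\gamma(j,\ell)/\rho$ already makes the primal cost of an accepted request $\Omega(\rho b_j)$ against a dual gain of $b_j$, so $\alpha=2$ cannot come from the unscaled primal. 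To get $\alpha=2$ literally one must exhibit a feasible covering solution of cost at most $2 B_j^T Y_j$; the natural candidate is $(X/\rho,Z/\rho)$, and for accepted requests your own computation $z_j+\gamma(j,\ell')\geq\rho b_j$ shows the scaled constraint holds, with $\PR_j/\rho\leq 2b_j$. But for a rejected request the threshold $\gamma(j,\ell)>\rho b_j$ combined with the oracle guarantee gives only $\min_{\ell'}\gamma(j,\ell')>b_j$, hence $\min_{\ell'}\gamma(j,\ell')/\rho>b_j/\rho$, which is a factor $\rho$ short of the scaled covering constraint. You must either close that hole (which the algorithm as specified does not seem to permit) or settle for the statement that the algorithm is $(2\rho,\log_2(1+3\rho\,(\max_{j,\ell}w(j,\ell))\,(\max_j b_j)))$-competitive --- which, after normalizing $Y$ by the congestion as in the corollary following Theorem~\ref{thm:general}, yields the same overall fractional competitive ratio $O(\beta_\rho)$ as the theorem claims, but is not literally the asserted $(2,\beta_\rho)$ pair.
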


\section{Application to VNet Service Models}\label{sec:application}

In this section we show how the framework for online packing can be
applied to online VNet embeddings.  The key issue that needs to be
addressed is the oracles in Line~\ref{step:oracle} of the \GVOP\
algorithm.

We consider the three traffic models: customer-pipe, hose and
aggregate ingress. We also consider three routing models: multipath,
single path and tree routing.

\noindent Recall that $\beta$ in Theorem~\ref{thm:general} is the
factor by which the \GVOP\ algorithm augments resources. Recall that
$\beta'$ is the resource augmentation if VNet requests have
durations. The following corollary summarizes our main
result as stated in Theorem~\ref{thm:main}.
The following corollary states the values of $\beta$ and
$\beta'$ when applying Theorems~\ref{thm:general}
and~\ref{thm:generalDur} to the cases described below.
\begin{coro}~\label{coro:beta} %
  The values of $\beta$ and $\beta'$ in Theorems~\ref{thm:general}
  and~\ref{thm:generalDur} are $\beta=O(\log (|E|\cdot (\max_e c_e)
  \cdot (\max_j b_j)))$ and $\beta'=O(\log (|T_{\max}|\cdot |E|\cdot
  (\max_e c_e) \cdot (\max_j b_j)))$ for any sequence of VNet requests
  from the following types: (i)~customer pipe model with multipath
  routing, (ii)~hose model with multipath routing, single path
  routing\footnote{The oracle in this case does not run in
    polynomial time.}, or tree routing$^1$, or (iii)~aggregate ingress
  model with multipath routing, single path routing, or tree routing.
\end{coro}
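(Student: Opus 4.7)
The plan is straightforward: by Theorems~\ref{thm:general} and~\ref{thm:generalDur}, the values of $\beta$ and $\beta'$ are determined by $\max_{j,\ell} w(j,\ell)$, the largest column-sum of the constraint matrix $A$, so the entire corollary reduces to bounding this quantity uniformly across the six traffic/routing combinations listed.

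First I would observe that $w(j,\ell)$ equals the total bandwidth that embedding $f_{j,\ell}$ reserves, summed over all substrate edges. By Remark~\ref{rem:feas} we may assume $A_{e,(j,\ell)} \leq c_e$ for every row $e$ (any infeasible request is scaled down at the input stage), so
\[
w(j,\ell) \;=\; \sum_{e \in E} A_{e,(j,\ell)} \;\leq\; \sum_{e \in E} c_e \;\leq\; |E|\cdot \max_e c_e.
\]
This bound is model-independent: in every one of the six cases, a valid embedding is a nonnegative bandwidth assignment on the edges of $E$, and nothing else contributes to a column of $A$. Substituting into $\beta = \log_2(1+3\cdot(\max_{j,\ell} w(j,\ell))\cdot(\max_j b_j))$ and $\beta'=\log_2(1+3\cdot T_{\max}\cdot(\max_{j,\ell} w(j,\ell))\cdot(\max_j b_j))$ yields the asymptotic expressions claimed.

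To complete the application of the framework, one also needs an oracle realizing Step~\ref{step:oracle} of the \GVOP\ algorithm. For multipath routing in each of the three traffic models, a min-$X$-cost embedding is expressible as a compact linear program---a min-cost multicommodity flow in the customer-pipe model, the LP of~\cite{erlebach2004optimal} in the hose model, and an analogous LP for the aggregate-ingress model---and can therefore be solved exactly in polynomial time. For tree routing in the hose and aggregate-ingress models, the constant-factor approximation algorithms of~\cite{eisenbrand2005improved,icalp10core,gupta2003simpler} serve as $\rho$-approximate oracles, and Theorem~\ref{thm:apx oracle} absorbs the resulting constant $\rho$ into the logarithm without changing the asymptotic form of $\beta$ or $\beta'$.

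The main obstacle is only apparent, not real: the single-path cases in the hose and aggregate-ingress models are NP-hard as offline embedding problems (by~\cite{hose}), so no polynomial-time oracle is known there, which is exactly what the footnote flags. However, the claim of the corollary concerns only the values of $\beta$ and $\beta'$, which depend on the matrix $A$ and the capacities but not on algorithmic efficiency; the column-weight bound above is structural and applies to any feasible embedding, so the asymptotic bounds hold whether the oracle runs in polynomial time or not.
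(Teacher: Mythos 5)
Your central calculation---that $w(j,\ell)=\sum_e A_{e,(j,\ell)}\le \sum_e c_e\le |E|\cdot\max_e c_e$ because a valid embedding never reserves more bandwidth than the capacity of any edge, and that substituting this into the formulas for $\beta$ and $\beta'$ gives the claimed asymptotics---is indeed the calculation the paper relies on, and your closing observation that the corollary concerns only the value of $\beta$, not oracle efficiency, is also right. But there is a genuine gap: Theorem~\ref{thm:general} has a second hypothesis, namely that every nonzero entry of $A$ satisfies $A_{e,(j,\ell)}\in\{0\}\cup[1,\infty)$, and you never verify it. It in fact \emph{fails} in the multipath cases and for LP-based hose reservations: a min-cost multicommodity flow or an optimal reservation can place an arbitrarily small positive amount of flow on an edge. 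This hypothesis is not cosmetic---in the proof of Theorem~\ref{thm:general} it is what turns the acceptance condition $\gamma(j,\ell)<b_j$ into the bound $x_e^{(j)}\le 3b_j$ (via $x_e\le\gamma(j,\ell)/A_{e,(j,\ell)}$), which is the step that yields the logarithmic congestion guarantee. The paper's proof of the corollary spends most of its effort on exactly this point: it relaxes the hypothesis to $A_{e,(j,\ell)}\in\{0\}\cup[1/N^2,\infty)$ (costing only a constant factor in $\beta$) and modifies the oracle's output so that tiny positive reservations are peeled off and rerouted, at the price of a bounded capacity violation and at most a doubling of the cost, with details deferred to~\cite{ONMCF}. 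Without some such argument your proof does not establish that Theorem~\ref{thm:general} applies to cases (i), (ii)-multipath, or (iii)-multipath at all.

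A secondary error: for tree routing in the hose model you propose the constant-factor approximations of~\cite{eisenbrand2005improved,icalp10core,gupta2003simpler} as $\rho$-approximate oracles. Those algorithms minimize cost \emph{without} edge-capacity constraints; in the capacitated setting, \cite{hose} shows that approximating the optimal tree routing within a constant factor is NP-hard, which is precisely why the paper resorts to a non-polynomial exact oracle there and why the corollary's footnote marks both single-path and tree routing in the hose model as non-polynomial. As you note, this does not change the asymptotic value of $\beta$, but the justification you give for that case is incorrect.
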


\begin{rem}
  Our framework can handle heterogeneous VNet requests, i.e., requests
  from any of the customer service models and routing models included in Corollary~\ref{coro:beta}. Each time
  a request arrives, the corresponding oracle procedure is invoked,
  without disturbing existing requests. This implies that the network
  resources are shared between requests of all types.
\end{rem}
\subsection{Proof of Corollary~\protect{\ref{coro:beta}}}
\noindent
The proof deals with each traffic model separately.
\paragraph{Customer Pipe Model} In multipath routing, an embedding of
a request is a multicommodity flow.  The flow along each edge equals
the bandwidth reservation needed to support the request. This means
that, for each request $r_j$, the set of valid embeddings $\Delta_j$
of $r_j$ consists of all the multicommodity flows specified by the
traffic matrix and the edge capacities. For a multicommodity flow
$f\in \Delta_j$, the entry $A_{e,f}$ equals the flow $f(e)$.  The
oracle needs to compute a min-cost multicommodity flow in $\Delta_j$,
where a cost of a unit flow along an edge $e$ equals $x_e$. A min-cost
multicommodity flow can be computed by solving a linear program or by
a using a PTAS~\cite{young2001sequential}.

A technical issue that needs to be addressed is that the flow along an
edge may be positive yet smaller than one, thus violating the
requirement in Theorem~\ref{thm:general}.  A key observation is that
the requirement in Theorem~\ref{thm:general} can be relaxed to
$A_{e,(j,\ell)} \in \{0\} \cup [\frac{1}{N^2},\infty)$.  This only
affects the augmentation by a constant factor.  Thus, one can deal
with this issue by peeling off such a flow, and rerouting it along other
paths.  On the other hand, the resulting oracle in this case may
violate edge capacities. We refer the reader to~\cite{ONMCF} where an
extension of \GVOP\ that deals with such an oracle is discussed in
detail.

\paragraph{Hose Model} In multipath routing, an embedding is a
reservation $u$ of bandwidths to edges so that every allowed traffic
can be routed as a multicommodity flow. An entry $A_{e,(j,u)}$ equals
the bandwidth $u_e$ reserved in $e$ for the embedding of request
$r_j$.  In~\cite{erlebach2004optimal}, a linear programming based
poly-time algorithm is presented for a min-cost reservation in the hose
model.  As in the case of the customer pipe model, positive bandwidth
reservations of an edge might be smaller than $1$.  The oracle in this
case executes the algorithm in ~\cite{erlebach2004optimal} to obtain
an optimal reservation. This reservation is modified so that (1)~the
minimum positive bandwidth reservation is $\Omega(1/m^2)$, (2)~edge
capacities are violated at most by a constant factor, and (3)~the cost
of the embedding is at most doubled.  We refer the reader
to~\cite{ONMCF} where an extension of \GVOP\ that deals with such an
oracle is discussed.

An efficient approximate oracle for tree routing in the hose model is
an open problem~\cite{hose}. We elaborate on a non-polynomial oracle that focuses
on the online aspects of the problem.  A Steiner tree $T$ is feasible
if and only if the bandwidth $u(e)$ reserved for each edge $e$ equals
the maximum traffic that may traverse $e$ and $u(e)\leq c_e$.
Indeed, let $A_e \cup B_e$ denote a partitioning of the terminals
$U_j$ induced by the deletion of the edge $e$ from $T$.  The maximum
traffic along $e$ by request $r_j$ equals
\[
\min\left\{
\sum_{u\in A_e} \bout (u),
\sum_{v\in B_e} \bin(v)
\right\}
+
\min\left\{
n\sum_{u\in A_e} \bin(u),
\sum_{v\in B_e} \bout(v)
\right\}.
\]
The non-polynomial oracle simply returns a min-cost feasible Steiner
tree that spans the set of terminals $U_j$. If no feasible Steiner
tree exists, the request is rejected. A similar non-polynomial oracle exists for single path routing.


\paragraph{Aggregate Ingress Model}
An embedding in the aggregate ingress model is also a reservation of
bandwidths so that every allowed traffic can be routed. In the
multipath routing model, a min-cost embedding can be obtained by a
variation of the algorithm presented in~\cite{erlebach2004optimal}
combined with the a modification that avoids small reservations as
discussed previously for the customer pipe and hose models.  Dealing
with small reservations is done similarly to the way described in the
hose model case.

A min-cost single path routing embedding in the aggregate ingress
model is always a tree with uniform bandwidth reservation that equals
the ingress amount. Thus, the routing models of single paths and trees
coincide in this case.  This implies that a min-cost tree embedding is
simply a min-cost Steiner tree.  The oracle in this case proceeds as
follows.  (1)~Delete all edges the capacity of which is less than the
ingress amount.  If this deletion disconnects the terminals in $U_j$,
then reject the request.  (2)~Compute a min-cost Steiner tree over the
remaining edges (see~\cite{gensteiner} and~\cite{stoc10lp} for the
best approximation to date).

\subsection{Router Loads}\label{sec:nodes}

So far we have focused on the load incurred over the edges, i.e.,
the flow (e.g., data rate) along an edge is bounded by the edge
capacity (e.g., available bandwidth). In this section we also view
the nodes of the network as resources. We model the load incurred
over the nodes by the rate of the packets that traverse a node.
Thus, a request is characterized, in addition, by the so-called
\emph{packet rate}.

In this setting, each node (router) $v$ has a computational capacity
$c_v$ that specifies the maximum rate of packets that node $v$ can
process. The justification for modeling the load over a node in this
way is that a router must inspect each packet. The capacity
constraint of a node $v$ simply states that the sum of the packet
rates along edges incident to $v$ must be bounded by $c_v$.

For simplicity, we consider the aggregate ingress model with tree
routing. A request $r_j$ has an additional parameter $pr_j$ that
specifies the aggregate ingress packet rate, i.e., $pr_j$ is an
upper bound on the sum of the packet rates of all ingress traffic
for request $r_j$.

Applying our framework requires to add a row in $A$ to each node (in
addition to a row per edge). An entry $A_{v,u}$ equals $pr_j$ if the
reservation $u$ of capacities assigns a positive capacity to an edge
incident to $v$, and zero otherwise. The oracle now needs to compute
a node-weighted Steiner tree~\cite{klein95}. The approximation ratio
for this problem is $O(\log k_j)$, where $k_j$ denotes the number of
terminals in request $r_j$.

The following corollary summarizes the values of $\rho$ and
$\beta_{\rho}$ when applying Theorem~\ref{thm:apx oracle} to router
loads.  One can extend also Theorem~\ref{thm:generalDur} in a
similar fashion.
\begin{coro}
  In the aggregate ingress model with tree routing, $\rho=O(\log
  \max_j k_j)$ and $\beta_{\rho}=O(\rho \cdot\log (\rho \cdot (|E|\cdot (\max_e
  c_e) +|V|\cdot (\max_v c_v)) \cdot (\max_j b_j)))$.
\end{coro}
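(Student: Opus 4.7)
The plan is to reduce this to a direct application of Theorem~\ref{thm:apx oracle}, once we identify the approximation ratio $\rho$ of the oracle and give an upper bound on $\max_{j,\ell} w(j,\ell)$ in this extended LP. The setup mirrors the aggregate-ingress/tree-routing case from Corollary~\ref{coro:beta}, but with the matrix $A$ augmented by one additional row per node $v \in V$, as described in Section~\ref{sec:nodes}. So first I would formally write down this augmented matrix: for an embedding of request $r_j$ given by a Steiner tree $T$ spanning the terminals $U_j$, the column of $A$ corresponding to $(j,T)$ has $A_{e,(j,T)} = \ingress_j$ for every edge $e \in E(T)$, $A_{v,(j,T)} = pr_j$ for every node $v$ incident to an edge of $T$, and zeros elsewhere. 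Feasibility (either upfront or enforced by scaling as in Remark~\ref{rem:feas}) ensures $\ingress_j \le c_e$ and $pr_j \le c_v$ for the active resources, so the hypotheses of Theorem~\ref{thm:general} continue to hold.

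Next I would bound $w(j,\ell)$. Since every column of the augmented $A$ has nonzero entries only on the $|E(T)|$ active edges and the $|V(T)|$ active nodes, and since on those rows the entries are bounded by the corresponding capacities, we get
\[
w(j,\ell) \;=\; \sum_{e \in E(T)} \ingress_j + \sum_{v \in V(T)} pr_j \;\le\; |E|\cdot(\max_e c_e) \;+\; |V|\cdot(\max_v c_v).
\]
This is exactly the combined-resource quantity appearing inside the logarithm in the stated $\beta_\rho$.

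Then I would describe the oracle. The $\gamma$-cost of a tree embedding $T$ in the augmented matrix is
\[
\gamma(j,T) \;=\; \ingress_j \sum_{e\in E(T)} x_e \;+\; pr_j \sum_{v\in V(T)} x_v,
\]
so minimizing $\gamma(j,\cdot)$ over $\Delta_j$ is a node-and-edge-weighted Steiner tree problem on terminals $U_j$. This is reducible in the standard way to node-weighted Steiner tree (push each edge weight onto an auxiliary subdivision vertex), for which the Klein--Ravi algorithm~\cite{klein95} gives an $O(\log k_j)$-approximation. Hence the oracle is $\rho$-approximate with $\rho = O(\log \max_j k_j)$, as claimed.

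Finally I would invoke Theorem~\ref{thm:apx oracle}, which asserts $\beta_\rho = \rho \cdot \log_2(1+3\rho \cdot (\max_{j,\ell} w(j,\ell)) \cdot (\max_j b_j))$. Substituting the bound on $w(j,\ell)$ above yields
\[
\beta_\rho \;=\; O\!\left(\rho \cdot \log\!\left(\rho \cdot \bigl(|E|\cdot(\max_e c_e) + |V|\cdot(\max_v c_v)\bigr) \cdot (\max_j b_j)\right)\right),
\]
which is exactly the claimed expression. I expect the only non-routine step to be the reduction from the edge-and-node-weighted Steiner tree objective to pure node-weighted Steiner tree so that the Klein--Ravi guarantee applies with $\rho = O(\log k_j)$; everything else is bookkeeping plus one clean appeal to Theorem~\ref{thm:apx oracle}.
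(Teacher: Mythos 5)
Your proposal is correct and follows essentially the same route the paper takes: augment $A$ with one row per node carrying the packet rate $pr_j$, bound the column weight by $|E|\cdot(\max_e c_e)+|V|\cdot(\max_v c_v)$, use the Klein--Ravi node-weighted Steiner tree algorithm (the edge-to-node-weight reduction via subdivision being the standard way to handle the mixed objective) to get $\rho=O(\log \max_j k_j)$, and plug everything into Theorem~\ref{thm:apx oracle}. The paper only sketches this in Section~\ref{sec:nodes} rather than writing it out; your version supplies the same argument with the bookkeeping made explicit.
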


\section{Discussion}\label{sec:conclusion}\label{sec:discussion}

This article presents a unified algorithm for online embeddings of
VNets. Each VNet request consists of endpoints, quality-of-service
constraints, and a routing model. The algorithm can handle VNets
requests from different models (e.g., the customer-pipe, hose, and
aggregate-ingress models), and each request may allow a different
routing model (e.g., multipath, single-path, and tree-routing). Since
the problem we address is a generalization of online circuit
switching~\cite{AAP}, it follows that the lower bounds apply to our
case as well. Namely, the competitive ratio of any online algorithm is
$\Omega(\log (n \cdot T_{\max}))$, where $n$ denotes the number of
nodes and $T_{\max}$ is the maximal duration.

In the context of hose model with tree routing, we emphasize that
finding a feasible Steiner tree is NP-hard to approximate within a
constant factor~\cite{hose}.  One can relax the feasibility
requirement, and compute a Steiner tree that violates the capacities
of $G$ by a factor of $\mu \geq 1$ while satisfying the ingress/egress
demand of every terminal. The oracle in this case will be bi-criteria.
Bi-criteria oracles can be incorporated in the primal-dual scheme as
shown in~\cite{ONMCF}. To our knowledge, the question whether there
is a polynomial time bi-criteria algorithm for min-cost Steiner trees
in the hose model is open.





\subsection*{Acknowledgments}

Part of this work was performed within the Virtu project, funded by
NTT DoCoMo Euro Labs, and the Collaborative Networking project,
funded by Deutsche Telekom AG. We would like to thank our colleagues
in these projects for many fruitful discussions. We are grateful to
Ernesto Abarca and Johannes Grassler for their help with the
prototype architecture~\cite{visa09virtu}, and to Boaz Patt-Shamir
for initial discussions.

\nocite{AAmulticast}
\renewcommand{\baselinestretch}{.49}

  \bibliographystyle{abbrv} \bibliography{graph-packing}

\end{document}